\theoremstyle{thmstyleone}%
\newtheorem{theorem}{Theorem}
\theoremstyle{thmstyletwo}%
\theoremstyle{thmstylethree}%
\begin{document}

\title[Fast Fr\'echet Distance]{Walking Your Frog Fast in 4 LoC}


%
%
%

\author*[1]{\fnm{Nis} \sur{Meinert}}\email{Nis.Meinert@dlr.de}
\affil*[1]{\orgdiv{Institute of Communications and Navigation, Nautical Systems}, \orgname{German Aerospace Center (DLR)}, \orgaddress{\street{Kalkhorstweg 53}, \city{Neustrelitz}, \postcode{17235}, \country{Germany}}}


\abstract{Given two polygonal curves, there are many ways to define a notion of similarity between them. One popular measure is the Fr\'echet distance which has many desirable properties but is notoriously expensive to calculate, especially for non-trivial metrics. In 1994, Eiter and Mannila introduced the discrete Fr\'echet distance which is much easier to implement and approximates the continuous Fr\'echet distance with a quadratic runtime overhead. However, this algorithm relies on recursions and is not well suited for modern hardware. To that end, we introduce the Fast Fr\'echet Distance algorithm, a recursion-free algorithm that calculates the discrete Fr\'echet distance with a linear memory overhead and that can utilize modern hardware more effectively. We showcase an implementation with only four lines of code and present benchmarks of our algorithm running fast on modern CPUs and GPGPUs.}

\keywords{Fr\'echet distance, metrics, curves, polygonal curves}

\pacs[MSC Classification]{51K05, 65D18, 65Y05, 68U05, 68W10, 68W25, 90C39}

\maketitle

\section{Introduction}
The Fr\'echet distance was introduced by Maurice Fr\'echet in 1906 \citep{frechet06} and is well known as a fundamental metric in abstract spaces in the field of mathematics and computational geometry.
Over the years, it has found applications in various domains, e.g., as a distance measure between probability distributions it is equivalent to the Wasserstein-2 distance \citep{kantorovich60,wasserstein69} for the $\ell^2$-norm.
Often, it can be evaluated efficiently, e.g., \citet{dowson82} demonstrated that for normal distributions the Fr\'echet distance can be calculated explicitly and since then it is used a basis of various evaluation metrics such as the Fr\'echet inception distance \citep{heusel17}, Fr\'echet audio distance \citep{kilgour19}, or the Fr\'echet ChemNet distance \citep{preuer18}.

In the early 1990s, Alt and Godau were the first to apply the Fr\'echet distance in measuring the similarity of polygonal curves \citep{alt92,alt95}.
Here, however, there is no closed-form expression that would yield the Fr\'echet distance for this setting.
Instead, the problem can be solved with dynamic programming and finding efficient algorithms remains a vibrant area of research \citep{aronov06,avraham14,agarwal14,bringmann19}.
Our contribution to this ongoing exploration is an efficient algorithm for the discrete Fr\'echet distance that is optimized to run fast on modern hardware.
In particular, we reformulate the recursive algorithm proposed by \citet{eiter94} as an iterative, branchless algorithm and reduce its quadratic memory requirement to a linear one.
Our formulation requires just four lines of pseudo-code, excluding function signatures, and can be vectorized to take full advantage of modern CPU and GPGPU architectures.
To the best of our knowledge, we are the first to comprehensively propose these modifications although it is possible that others---while only referring to \citet{eiter94}---are already using some of the modifications in practice without explicitly noting.

The remainder of this paper is structured as follows:
First, we describe the continuous and discrete Fr\'echet distance for polygonal curves in Sec.~\ref{sec:frechet}.
In Sec.~\ref{sec:fast} we present our improved algorithm.
Subsequently, we outline how our solution can be parallelized in Sec.~\ref{sec:parallel} and summarize our contribution in Sec.~\ref{sec:summary}.

\section{The Fr\'echet Distance}
\label{sec:frechet}
Formally, let $p$ and $q$ be two curves in a metric space $\mathcal{S}$.
Then, the (continuous) Fr\'echet distance $\delta_\mathrm{F}$ between $p$ and $q$ is defined as the infimum over all possible continuous, non-decreasing, surjections $\alpha: [0, 1] \to [0, 1]$ and $\beta: [0, 1] \to [0, 1]$ of the maximum over all $t \in [0, 1]$ of the distance in $\mathcal{S}$ between $p(\alpha(t))$ and $q(\beta(t))$,
\begin{equation*}
    \delta_\mathrm{F}(p, q) = \inf_{\alpha,\beta} \; \max_{t \in [0, 1]} \left\| p(\alpha(t)) - q(\beta(t)) \right\|_\mathcal{S},
\end{equation*}
where $\| \cdot \|_\mathcal{S}$ is the distance function of $\mathcal{S}$.
Informally, this measure is often described as the minimal leash length, measured by $\| \cdot \|_\mathcal{S}$, of a man walking his dog if both are moving on trajectories $p(\alpha(t))$ and $q(\beta(t))$, respectively, where $t$ is interpreted as a measure of time.
In this interpretation the restriction that $\alpha(t)$ and $\beta(t)$ be non-decreasing means that neither the dog nor its owner can backtrack.
Adding this constraint makes the Fr\'echet distance unconditionally stronger than the Hausdorff distance in the sense that it is always greater or equal than their corresponding Hausdorff distance between two curves.

Although non-straightforward to compute, the Fr\'echet distance has been used successfully in various fields such as curve simplification \citep{agarwal05,bereg08,kreveld18,kerkhof19}, map-matching \citep{alt03a,driemel12,sharma17} and clustering \citep{buchin08b,besse16,buchin19b,buchin19c}.
The Fr\'echet distance also has applications in matching biological sequences \citep{wylie12}, analysing tracking data \citep{brakatsoulas05,buchin08a}, and matching coastline data \citep{mascret06}.
In particular for clustering, a major advantage of the Fr\'echet distance is that it fulfills the triangle inequality and therefore is not only a distance measure but also a metric. 
Other measures of curve similarity, such as DTW~\citep{vintsyuk68,sankoff83} or the average Fr\'echet distance~\citep{brakatsoulas05}, do not obey the triangle inequality which significantly limits their application for downstream tasks in practice.

Computing the Fr\'echet distance between two precise curves can be done in near-quadratic time \citep{agarwal14,alt95,buchin17a}, and assuming the strong exponential time hypothesis, it cannot be computed or even approximated well in strongly subquadratic time unless SETH fails \citep{bringmann14,buchin19a}.
In practice, algorithms with a theoretical quadratic complexity often rely on certain pre-processing steps and are complex to implement in general.
Others, solve related problems such as the \textit{decider}, where the algorithm only determines if the Fr\'echet distance between curves is below a certain threshold, which does not help in downstream tasks where the exact distance is required, e.g., clustering \citep{bringmann19}.
Furthermore---and to the best of our knowledge---almost all of the analyses and applications of the continuous Fr\'echet distance between polygonal curves either explicitly or implicitly rely on the Euclidean distance for $\| \cdot \|_\mathcal{S}$ which makes the corresponding pathing in \textit{free-space} a convex optimization problem \citep{alt92,alt95}.
Extending the results to other metrics, e.g., the great-circle distance, is not straightforward as one quickly looses auxiliary properties such as the convexity.

One way to overcome these challenges is to estimate the continuous Fr\'echet distance of polygonal curves $\delta_\mathrm{F}$ with a discrete approximation $\delta_\mathrm{dF}$:
Instead of taking all points of the (continuous) curves into account, the discrete Fr\'echet distance only includes a finite number of them.
A typical choice for the set of the points is the set of vertices of the given polygonal curve itself, however, in order to decrease the approximation error, additional points can be sampled between them.
Informally and in comparison to the common \textit{man-walking-his-dog} analogy, the discrete Fr\'echet distance can be thought of as the minimal leash length between two frogs jumping between stones, where \textit{stones} are the vertices and the leash is not taken into consideration during jumps.

In 1994, \citet{eiter94} defined the discrete Fr\'echet distance $\delta_\mathrm{dF}$ between two $D$-dimensional polygonal curves $p \in \mathbb{R}^{P \times D}$ and $q \in \mathbb{R}^{Q \times D}$, proposed the simple Alg.~\ref{alg:vanilla} of complexity $\mathcal{O}(D'PQ)$, where $D'$ is the complexity of evaluating $\| \cdot \|_\mathcal{S}$, and showed that the difference between the continuous Fr\'echet distance and its discrete variant is bound by the sample width of the curves,
\begin{equation*}
    \delta_\mathrm{F}(p, q) \;\le\; \delta_\mathrm{dF}(p, q) \;\le\; \delta_\mathrm{F}(p, q) + \max\{ \varepsilon_p, \varepsilon_q \},
\end{equation*}
where $\varepsilon_p$ and $\varepsilon_q$ are the largest distance between adjacent points (the \textit{stones}) on $p$ and $q$, respectively.

\begin{algorithm}
    \caption{The original algorithm for calculating the discrete Fr\'echet distance as proposed by \citet{eiter94} for a generic distance measure $f: \mathbb{R}^D \times \mathbb{R}^D \mapsto \mathbb{R}$.}
    \label{alg:vanilla}
    \begin{algorithmic}[1]
        \Function{fr\'echet\_distance}{$p$: $\mathbb{R}^{P \times D}$, $q$: $\mathbb{R}^{Q \times D}$, $f$: $\mathbb{R}^D \times \mathbb{R}^D \mapsto \mathbb{R}$} $\to \mathbb{R}$
            \State $M: \mathbb{R}^{P \times Q}$ \label{alg:vanilla:M}
            \Statex
            \Function{eval}{$i: \mathbb{N}$, $j: \mathbb{N}$} $\to \mathbb{R}$
                \If{$M_{ij} > -1$}
                    \State \Return $M_{ij}$
                \EndIf
                \Statex
                \State $d: \mathbb{R}$ $\gets$ \Call{$f$}{$p_i, q_j$}
                \If{$i = 1 \;\land\; j = 1$} \label{alg:vanilla:if1}
                    \State $M_{ij} \gets d$
                \ElsIf{$i > 1 \;\land\; j = 1$} \label{alg:vanilla:if2}
                    \State $M_{ij} \gets \max\{\Call{eval}{i - 1, 1},\, d\}$
                \ElsIf{$i = 1 \;\land\; j > 1$} \label{alg:vanilla:if3}
                    \State $M_{ij} \gets \max\{\Call{eval}{1, j - 1},\, d\}$
                \Else \label{alg:vanilla:if4}
                    \State $M_{ij} \gets \max\{\min\{\Call{eval}{i - 1, j},\, \Call{eval}{i - 1, j - 1},\, \Call{eval}{i, j - 1}\},\, d\}$
                \EndIf
                \Statex
                \State \Return $M_{ij}$
            \EndFunction
            \Statex
            \State \Return \Call{eval}{$P, Q$}
        \EndFunction
    \end{algorithmic}
\end{algorithm}

Since then, the discrete Fr\'echet distance is used frequently, e.g., \citet{sriraghavendra07} have used it for handwriting recognition and \citet{jiang08} used the discrete Fr\'echet distance to tackle the protein structure-structure alignment problem.
Arguably, for the latter utilizing the discrete Fr\'echet distance even makes more sense than the continuous Fr\'echet distance as the backbone of a protein is simply a polygonal chain in 3D, with each vertex being the alpha-carbon atom of a residue.
Hence, if the continuous Fr\'echet distance is realized by an alpha-carbon atom and some other point which does not represent an atom, it is not meaningful biologically.
Later, \citet{zhu07} extended this work and analyzed the protein local structural alignment problem using bounded discrete Fr\'echet distance.

Furthermore, the Fr\'echet distance serves as a pivotal metric for assessing similarities among vessel trajectories in maritime research.
Here, the discrete Fr\'echet distance again emerges as a more fitting measure compared to its continuous variant, aligning seamlessly with the nature of trajectory data derived from discrete GNSS updates.
If not used for benchmarking alternative approaches \citep{li20,liu23,luo23}, in this context the discrete Fr\'echet distance is used a fundamental building block for movement analytics, clustering, and classification of the trajectories, for example:
\citet{sakan23} studied route characteristics by using the discrete Fr\'{e}chet distance to measure the similarity of individual container fleets from a statistical perspective.
\citet{cao18} used the discrete Fr\'{e}chet distance to solve the maximum distance between vessel trajectories and obtained a distance matrix, which was then decomposed using PCA to determine trajectory cluster.
Similarly, \citet{roberts18} constructed a graph where vessel trajectories are the vertices and the edges are weighted by the discrete Fr\'echet distance between two vertices.

Although applied successfully in practice, the algorithm by \citet{eiter94} has three major disadvantages:
First, the recursive formulation makes the algorithm impractical for larger curves as the recursive calls cannot be elided by tail recursions, thus making the algorithm slow and prone to stack overflows.
Secondly, the algorithm needs to allocate a memory block $M$ of size $P \times Q$ (see line~\ref{alg:vanilla:M} in Alg.~\ref{alg:vanilla}).
For large curves this allocation is slow, pollutes cache lines, and potentially constrains parallelization due to limited memory resources.
Thirdly, the explicit branching points on lines~\ref{alg:vanilla:if1}, \ref{alg:vanilla:if2}, \ref{alg:vanilla:if3} and \ref{alg:vanilla:if4} in Alg.~\ref{alg:vanilla} put stress on the branch predictor and make it hard to reformulate the algorithm for an effective use on \textit{single instruction, multiple data} (SIMD) or \textit{single instruction, multiple threads} (SIMT) architectures \citep{flynn72} without causing significant branch divergences.

\section{An Iterative, Linear-Memory Algorithm}
\label{sec:fast}
In this section we introduce a new algorithm for finding the discrete Fr\'echet distance between two polygonal curves $p \in \mathbb{R}^{P \times D}$ and $q \in \mathbb{R}^{Q \times D}$.
Our algorithm, Alg.~\ref{alg:fast}, uses (left) fold/reduce operators for the dyadic functions \textsc{fr\'echet\_maxmin}$(\cdot, \cdot)$ and $\max\{ \cdot, \cdot \}$ on lines \ref{alg:fast:scan1} and \ref{alg:fast:scan2}, and a (left) scan/accumulate operator for the dyadic function \textsc{fr\'echet\_next}$(\cdot, \cdot)$ on line~\ref{alg:fast:fold}, where we borrowed the notation for the operators $/$ and $\backslash$ from \citet{iverson62,iverson79}.
In particular, note that our scan operator prepends the initial value $\max\{ a_1, x_1 \}$ to the returned sequence.
On line \ref{alg:fast:scan2}, the parameters of the dyadic function $\max\{ \cdot, \cdot \}$ have the same type and thus we implicitly take the first value of the passed sequence as the initial value and pass the remaining sequence as the second argument.
Pseudocodes for the operators scan and fold are given in the Appendix in Alg.~\ref{alg:scan} and Alg.~\ref{alg:fold}, respectively.

\begin{theorem}[Fast Discrete Fr\'echet Distance Algorithm]
\label{thm:fast}
Alg.~\ref{alg:fast} calculates the discrete Fr\'echet distance given a distance matrix $d \in \mathbb{R}^{P \times Q}$.
The algorithm iteratively consumes the rows of $d$ such that each row, or even each element, can be computed lazily; the memory requirement is therefore reduced to $\mathcal{O}(Q)$.
The complexity of Alg.~\ref{alg:fast} is $\mathcal{O}(PQ)$ if $d$ is precomputed and $\mathcal{O}(D'PQ)$ if $d$ is evaluated lazily, where $D'$ is the complexity of evaluating a single element of $d$.

\begin{algorithm}
    \caption{A fast and concise algorithm that uses the fold of \textsc{fr\'echet\_next}$(\cdot, \cdot)$ (line~\ref{alg:fast:fold}) and the scans of \textsc{fr\'echet\_maxmin}$(\cdot, \cdot)$ (line~\ref{alg:fast:scan1}) and $\max \{ \cdot, \cdot \}$ (line~\ref{alg:fast:scan2}) to map a given distance matrix $d_{ij} = \| p_i - q_j \|_\mathcal{S}$ to the discrete Fr\'echet distance of $p \in \mathbb{R}^{P \times D}$ and $q \in \mathbb{R}^{Q \times D}$. The rows of $d$ can be evaluated lazily to achieve the linear memory requirement. The minimum on line~\ref{alg:fast:min} is taken element-wise. On line~\ref{alg:fast:scan1}, the column vectors $a_{2 \ldots Q}$ and $x_{2 \ldots Q}$ are stacked into a $\mathbb{R}^{(Q - 1) \times 2}$ matrix and iterated row-wise.}
    \label{alg:fast}
    \begin{algorithmic}[1]
        \Function{fr\'echet\_maxmin}{$a: \mathbb{R}$, $x: \mathbb{R}^2$} $\to \mathbb{R}$
            \State \Return $\max\{ \min\{ a,\, x_1 \},\, x_2 \}$
        \EndFunction
        \Statex
        \Function{fr\'echet\_next}{$a: \mathbb{R}^Q$, $x: \mathbb{R}^{Q}$} $\to \mathbb{R}^Q$
            \State $a_{2 \ldots Q} \gets \min\{ a_{1 \ldots Q-1},\, a_{2 \ldots Q} \}$ \label{alg:fast:min}
            \State \Return \Call{fr\'echet\_maxmin$\backslash$}{$\max\{a_1, x_1\},\, [a_{2 \ldots Q} \;|\; x_{2 \ldots Q}]$} \label{alg:fast:scan1}
        \EndFunction
        \Statex
        \Function{fr\'echet\_distance}{$d: \mathbb{R}^{P \times Q}$} $\to \mathbb{R}$
            \State \Return $[$\Call{fr\'echet\_next$/$}{$\max\!\backslash(d_1), d_{2 \ldots P}$}$]_Q$ \label{alg:fast:fold} \label{alg:fast:scan2}
        \EndFunction
    \end{algorithmic}
\end{algorithm}
\end{theorem}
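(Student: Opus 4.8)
\emph{Proof plan.}
The strategy is to show that Alg.~\ref{alg:fast} reconstructs, one row at a time, the dynamic‑programming table underlying the original recursion of \citet{eiter94}, and then to read the memory and time bounds directly off the structure of the fold. Recall that Alg.~\ref{alg:vanilla} fills an array $M\in\mathbb{R}^{P\times Q}$ according to
\begin{equation*}
M_{i,j}=\begin{cases}
d_{1,1} & i=j=1,\\
\max\{M_{i-1,1},\,d_{i,1}\} & i>1,\ j=1,\\
\max\{M_{1,j-1},\,d_{1,j}\} & i=1,\ j>1,\\
\max\{\min\{M_{i-1,j},\,M_{i-1,j-1},\,M_{i,j-1}\},\,d_{i,j}\} & i,j>1,
\end{cases}
\end{equation*}
and returns $M_{P,Q}=\delta_\mathrm{dF}(p,q)$; the correctness of this table is exactly what \citet{eiter94} establish, so it suffices to prove that the value produced on line~\ref{alg:fast:fold} equals $M_{P,Q}$. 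Throughout, write $M^{(i)}\in\mathbb{R}^{Q}$ for the $i$‑th row of $M$.

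First I would dispatch the base row: by the scan convention stated above the theorem, $\max\backslash(d_1)$ is the vector of prefix maxima $j\mapsto\max_{k\le j}d_{1,k}$, and comparing with the $i=1$ clauses of the recursion gives $\max\backslash(d_1)=M^{(1)}$. The heart of the argument is then the row‑transition identity $\textsc{fr\'echet\_next}(M^{(i-1)},d_i)=M^{(i)}$ for $i\ge 2$. Unfolding the body: line~\ref{alg:fast:min} replaces $a_j$ by $\min\{M^{(i-1)}_{j-1},M^{(i-1)}_j\}$ for $j\ge 2$ while leaving $a_1=M^{(i-1)}_1$; the scan on line~\ref{alg:fast:scan1} prepends $v_1:=\max\{a_1,x_1\}=\max\{M^{(i-1)}_1,d_{i,1}\}$, which is precisely $M^{(i)}_1$ by the $j=1$ clause, and for $j\ge 2$ emits $v_j=\textsc{fr\'echet\_maxmin}(v_{j-1},(a_j,x_j))=\max\{\min\{v_{j-1},\,M^{(i-1)}_{j-1},\,M^{(i-1)}_j\},\,d_{i,j}\}$. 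An induction on $j$ with hypothesis $v_{j-1}=M^{(i)}_{j-1}$ collapses this to $\max\{\min\{M^{(i)}_{j-1},M^{(i-1)}_{j-1},M^{(i-1)}_j\},d_{i,j}\}=M^{(i)}_j$, matching the $i,j>1$ clause.

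With the base case and the transition identity in hand, an outer induction over the row index — with the empty fold ($P=1$) returning the seed $M^{(1)}$, which correctly handles the single‑vertex case — shows that the left fold of \textsc{fr\'echet\_next} over $d_{2\ldots P}$ started from $M^{(1)}$ returns $M^{(P)}$, so $[\,\cdot\,]_Q$ yields $M_{P,Q}=\delta_\mathrm{dF}(p,q)$ and correctness follows. For the resource bounds: the fold carries a single accumulator in $\mathbb{R}^{Q}$, and each call to \textsc{fr\'echet\_next} reads only that previous row together with the current row $d_i$ (indeed only $d_{i,j}$ once $a_{j-1},a_j$ are known), so no $P\times Q$ array is ever materialized and the working set is $\mathcal{O}(Q)$; the base scan costs $\mathcal{O}(Q)$, and each of the $P-1$ fold steps performs the element‑wise $\min$ of line~\ref{alg:fast:min} and one scan of \textsc{fr\'echet\_maxmin}, both $\mathcal{O}(Q)$, for a total of $\mathcal{O}(PQ)$, which becomes $\mathcal{O}(D'PQ)$ when the entries of $d$ are evaluated lazily at cost $D'$ each.

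I expect the row‑transition step to be the only delicate point: the scan prepends its initial value, the stacked argument $[a_{2\ldots Q}\,|\,x_{2\ldots Q}]$ is iterated row‑wise over the indices $2,\dots,Q$, and line~\ref{alg:fast:min} deliberately introduces a one‑position shift, so the bookkeeping must be arranged so that the prepended element accounts for the $j=1$ boundary column and the scanned elements account for $j\ge 2$. Once those index ranges are pinned down, the remaining inductions are routine.
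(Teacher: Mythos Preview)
Your proof is correct, but it takes a genuinely different route from the paper. The paper argues by a chain of program transformations: it first rewrites Alg.~\ref{alg:vanilla} as a bottom-up double loop (Alg.~\ref{alg:no_recursion}), then pulls the boundary cases $i=1$ and $j=1$ out of the loop body to obtain a branchless in-place variant (Alg.~\ref{alg:inplace}), then observes that only two adjacent rows of $M$ are live at any time and collapses them into a single vector $v\in\mathbb{R}^Q$ (Alg.~\ref{alg:linear}), and finally rephrases the two nested loops as a fold and a scan to arrive at Alg.~\ref{alg:fast}. Correctness is carried implicitly through each refinement step rather than being proved against the recurrence directly.

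You instead go straight from Alg.~\ref{alg:fast} to the Eiter--Mannila table: you fix the invariant that the fold accumulator after processing row $i$ equals $M^{(i)}$, verify the base row via the prefix-max scan, and prove the row-transition identity $\textsc{fr\'echet\_next}(M^{(i-1)},d_i)=M^{(i)}$ by an inner induction on $j$. This is tighter as a mathematical proof---each equality is checked against the recurrence rather than asserted as ``the same computation in a different shape''---and it makes the index bookkeeping around line~\ref{alg:fast:min} and the prepended scan seed explicit. What the paper's derivation buys, on the other hand, is the intermediate algorithms themselves (Alg.~\ref{alg:no_recursion}--\ref{alg:linear}), which are independently useful and make clear how one would have \emph{discovered} Alg.~\ref{alg:fast}; your argument verifies the endpoint but does not exhibit that path.
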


\begin{proof}[Proof of Theorem~\ref{thm:fast}]
In order to prove Theorem~\ref{thm:fast} we start by rewriting Alg.~\ref{alg:vanilla} and replace the recursive calls with two explicit iterations over the points in $p$ and $q$.
Alg.~\ref{alg:vanilla} finds the discrete Fr\'echet distance between $p$ and $q$ \textit{top-down} by recursively solving the dynamic programming problem
\begin{equation}
    \label{eq:dynprogprob}
    M_{ij} = \max\{ \min\{ M_{i-1,j}, M_{i-1,j-1}, M_{i,j-1} \}, d_{ij} \},
\end{equation}
where $\delta_\mathrm{dF} = M_{PQ}$.
Alg.~\ref{alg:no_recursion} shows an algorithm that also solves Eq.~\eqref{eq:dynprogprob} but with a \textit{bottom-up} approach without recursions:
The nested loops start at low indices and eventually accumulate the result in $M_{ij}$ until $i=P$ and $j=Q$.
In fact, $M_{\alpha \beta}$ is the discrete Fr\'echet distance for the polygonal curves $p_i$ and $q_j$ with $i \in [1, \alpha]$ and $j \in [1, \beta]$, respectively, at every point.
The complexity of this algorithm is $\mathcal{O}(D'PQ)$ if the complexity of evaluating $f(p_i, q_j)$ on line \ref{alg:no_recursion:d} is $\mathcal{O}(D')$.

We believe that at least this simplification is already well established since \citet{ahn10} used a decision algorithm that is a close adaptation of Alg.~\ref{alg:vanilla} and is already implemented without recursions---yet still with a quadratic memory requirement and branches.
In Sec.~\ref{sec:dtw_levenshtein} we will discuss the relation of the discrete Fr\'echet distance to DTW and the Levenshtein distance.
Here as well iterative algorithms without recursions are common.

\begin{algorithm}
    \caption{Variant of Alg.~\ref{alg:vanilla} without recursions.}
    \label{alg:no_recursion}
    \begin{algorithmic}[1]
        \Function{fr\'echet\_distance}{$p: \mathbb{R}^{P \times D}$, $q: \mathbb{R}^{Q \times D}$, $f: \mathbb{R}^D \times \mathbb{R}^D \mapsto \mathbb{R}$} $\to \mathbb{R}$
            \State $M: \mathbb{R}^{P \times Q}$
            \Statex
            \For{$i \gets 1,P$}
                \For{$j \gets 1,Q$}
                    \State $d: \mathbb{R} \gets f(p_i, q_j)$ \label{alg:no_recursion:d}
                    \Statex
                    \If{$i = 1 \;\land\; j = 1$}
                        \State $M_{11} \gets d$
                    \ElsIf{$i > 1 \;\land\; j = 1$}
                        \State $M_{i,1} \gets \max\{M_{i-1,1},\, d \}$
                    \ElsIf{$i = 1 \;\land\; j > 1$}
                        \State $M_{1,j} \gets \max\{M_{1,j-1},\, d \}$
                    \Else
                        \State $M_{ij} \gets \max\{\min\{M_{i-1,j},\, M_{i-1,j-1},\, M_{i,j-1}\},\, d \}$
                    \EndIf
                \EndFor
            \EndFor
            \Statex
            \State \Return $M_{PQ}$   
        \EndFunction
    \end{algorithmic}
\end{algorithm}

Next, we remove branching points by evaluating the elements of $M_{ij}$ where either $i=1$ or $j=1$ before entering the nested loops with scans/accumulates of the first column (line~\ref{alg:inplace:scan1} of Alg.~\ref{alg:inplace}) and first row (line~\ref{alg:inplace:scan2} of Alg.~\ref{alg:inplace}) of $d$ using $\max\{ \cdot, \cdot \}$.
We note that this variant can be formulated as an in-place algorithm that operates on the distance matrix $d \in \mathbb{R}^{P \times Q}$ of $p$ and $q$ without needing any further allocations.
We show this variant in Alg.~\ref{alg:inplace}.

\begin{algorithm}
    \caption{In-place variant of Alg.~\ref{alg:vanilla}. This variant directly maps a given distance matrix $d_{ij} = \|p_i - q_j\|_\mathcal{S}$  to the corresponding Fr\'echet distance. Note that this variant has no explicit branching points.}
    \label{alg:inplace}
    \begin{algorithmic}[1]
        \Function{fr\'echet\_distance}{$d: \mathbb{R}^{P \times Q}$} $\to \mathbb{R}$
            \State $d_{:,1} \gets \max\!\backslash(d_{:,1})$ \label{alg:inplace:scan1}
            \State $d_{1,:} \gets \max\!\backslash(d_{1,:})$ \label{alg:inplace:scan2}
            \Statex
            \For{$i \gets 2,P$}
                \For{$j \gets 2,Q$}
                    \State $d_{ij} \gets \max\{\min\{d_{i-1,j},\, d_{i-1,j-1},\, d_{i,j-1}\},\, d_{ij} \}$
                \EndFor
            \EndFor
            \Statex
            \State \Return $d_{PQ}$   
        \EndFunction
    \end{algorithmic}
\end{algorithm}

Finally, we note that only two adjacent rows of $M$ (or $d$ in Alg.~\ref{alg:inplace}) are needed during the iterations.
In Alg.~\ref{alg:linear} we extract these rows, merge them into a single array $v \in \mathbb{R}^Q$ and thus reduce the overall memory requirement to $\mathcal{O}(Q)$.
In Alg.~\ref{alg:fast} the iterations over $p_i$ and $q_j$ have been rewritten as a fold/reduce and a scan/accumulate operation, respectively, and we adopted the concise function signature of Alg.~\ref{alg:inplace} for the sake of brevity.
\end{proof}

\begin{algorithm}
    \caption{Variant of Alg.~\ref{alg:vanilla} without recursions and explicit branching points, and a linear memory requirement.}
    \label{alg:linear}
    \begin{algorithmic}[1]
        \Function{fr\'echet\_distance}{$p: \mathbb{R}^{P \times D}$, $q: \mathbb{R}^{Q \times D}$, $f: \mathbb{R}^D \times \mathbb{R}^D \mapsto \mathbb{R}$} $\to \mathbb{R}$
            \State $v: \mathbb{R}^Q \gets \max\!\backslash(f(p_1, q))$
            \Statex
            \For{$i \gets 2,P$}
                \State $v_{2 \ldots Q} \gets \min\{ v_{1 \ldots Q-1},\, v_{2 \ldots Q} \}$
                \Statex
                \State $v_1 \gets \max\{ v_1, f(p_i, q_1) \}$
                \For{$j \gets 2,Q$}
                    \State $v_j \gets \max\{ \min\{ v_{j-1},\, v_j \},\, f(p_i, q_j) \}$ 
                \EndFor
            \EndFor
            \Statex
            \State \Return $v_Q$
        \EndFunction
    \end{algorithmic}
\end{algorithm}

The quadratic complexity of the algorithms can be improved if further approximation are made.
For example, \citet{aronov06} presented an efficient approximation algorithm for computing the discrete Fr\'echet distance of two natural classes of curves: $\kappa$-bounded curves and backbone curves.
They also proposed a pseudo-output-sensitive algorithm for computing the discrete Fr\'echet distance exactly.
However, even though the discrete Fr\'echet distance can be calculated faster for several restricted versions \citep{driemel12,gudmundsson18,buchin10,maheshwari11} our proposed algorithm works for the general case.
In particular, it does not make any assumptions about the metric or trajectory properties, such as Sakoe-Chiba bands, and scaling to higher dimensions only depends on $D'$.

\section{Parallel Implementations}
\label{sec:parallel}
Parallelizing \texttt{fr\'echet\_maxmin}$\backslash$, similar to the approaches introduced by \citet{kogge73} or \citet{brent82} for prefix sums, is not straightforward because \texttt{fr\'echet\_maxmin} is not associative.
However, due to the reduced memory requirement and the lack of branching points, the Fr\'echet distances between a batch of polygonal curves $\boldsymbol{p} \in [\mathbb{R}^{P_1 \times D}, \ldots, \mathbb{R}^{P_B \times D}]$ and a single polygonal curve $q \in \mathbb{R}^{Q \times D}$ can be easily evaluated in parallel on SIMD or SIMT architectures when curves within the batch are extended by repeating points\footnote{Repeating points of a polygonal curve does not change its Fr\'echet distance to others.} until $P_i = \tilde P \; \forall\, 1 \le i \le B$ such that $\boldsymbol{p}$ can be eventually written as a $\mathbb{R}^{\tilde P \times B \times D}$ tensor.

In case the variance of curve lengths within a batch is not too large, the benefit of evaluating the distance in batches compensates the cost of artificially extending curves and results in an improvement of the overall runtime.
Otherwise, this effect can be improved by sorting the curves by their respective lengths before assigning them to batches.

We benchmark the effectiveness of such a parallelization scheme by implementing Alg.~\ref{alg:fast} in \texttt{C++} using vectorization via SIMD instructions (CPU) and a CUDA implementation (GPGPU), and compare the performance with implementations of Alg.~\ref{alg:no_recursion} and \ref{alg:linear}.
Note that we intentionally not pick Alg.~\ref{alg:vanilla} for reference as this variant crashes due to its recursion nature quickly for larger trajectory sizes.

We conduct two experiments:
First, we generate $N$ random trajectories with $D=2$ of length $P = P_i = Q = 2^{10}$ and vary $N$ between $2^5$ and $2^{14}$.
Secondly, we keep $N$ fixed to $2^{10}$ and vary $P = P_i = Q$ between $2^5$ and $2^{14}$ points.
Each trajectory is generated by accumulating random steps in both dimensions, where each step, $(\Delta x, \Delta y)^\top \in \mathbb{R}^2$, is drawn uniformly from $\{-1, 0, +1\}$ in both directions, i.e., $\Delta x \times \Delta y \sim \{ -1, 0, +1 \} \times \{ -1, 0, +1 \}$.
We choose the Euclidean distance, $d_{ij} = \| p_i - q_j \|^2_2$, for the distance measure and evaluate it with the $\texttt{hypot}$ function of the $\texttt{C}$ standard library to simulate a non-trivial workload.

In order to unveil the full potential of vectorization, all implementations use 32-bit floating point numbers to represent (intermediate) steps during calculation.
We use an Ubuntu 22.04 machine using GCC-10 for the CUDA variant and GCC-11 for the rest.
The SIMD variant uses the \texttt{C++} extensions for parallelism as implemented in GCC-11~\citep{N4808}.
Builds where optimized using the flags \texttt{-O3}, \texttt{-DNDEBUG}, \texttt{-march=native}, and \texttt{-ffast-math}.
The reported numbers where taken after a warm-up phase during which the algorithm under investigation was run with the same data to minimize unwanted affects such as library loading, etc., during the measurements.

The results of the experiment are shown in Fig.~\ref{fig:benchmark_laptop}.
In Fig.~\ref{fig:benchmark_desktop} in the Appendix, we show additional results for a different hardware setup.

\begin{figure}[htbp]
    \centering
    \begin{subfigure}{.49\textwidth}
        \includegraphics[width=.8\textwidth]{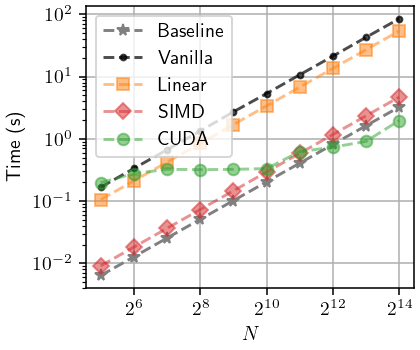}
        \caption{Absolute runtime with $P = 2^{10}$.}
    \end{subfigure}
    \begin{subfigure}{.49\textwidth}
        \includegraphics[width=.8\textwidth]{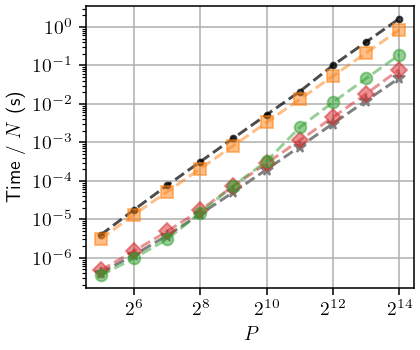}
        \caption{Absolute runtime with $N = 2^{10}$.}
    \end{subfigure}\\
    \begin{subfigure}{.49\textwidth}
        \includegraphics[width=.8\textwidth]{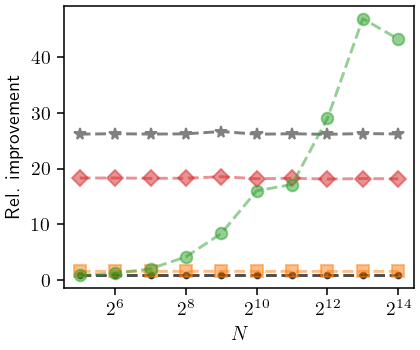}
        \caption{Relative improvement with $P = 2^{10}$.}
    \end{subfigure}
    \begin{subfigure}{.49\textwidth}
        \includegraphics[width=.8\textwidth]{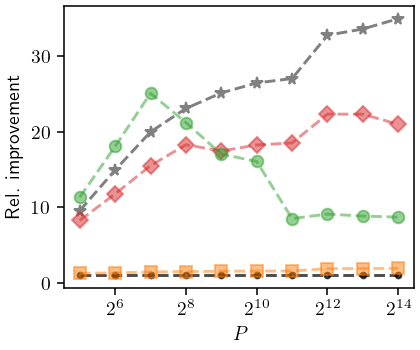}
        \caption{Relative improvement with $N = 2^{10}$.}
    \end{subfigure}
    \caption{Comparison of four different implementations of the Fast Fr\'echet Distance algorithm on a laptop (CPU: i7-11800H, GPU: NVIDIA GeForce RTX 3080 Mobile (16\,GB VRAM), CUDA Version: 12.2) using 32-bit floating point numbers. \textit{Vanilla} and \textit{Linear} refer to Alg.~\ref{alg:no_recursion} and \ref{alg:linear}, respectively. The SIMD implementation uses a batch size of $B = 32$ (twice the size of a 512-bit register in order to improve the instruction level parallelism) and relies on the AVX-512 instruction set; the baseline implementation uses the same technique to calculate $\sum_{ij} d_{ij}$. The CUDA implementation uses a grid and block size of 128 and 64, respectively, that was measured to perform best for $N=2^{13}$ and $P=2^{10}$. All variants utilize only a single CPU core.}
    \label{fig:benchmark_laptop}
\end{figure}

Besides the four variants, we further benchmark a baseline implementation that calculates the sum of all entries of the distance matrix $d \in \mathbb{R}^{P \times Q}$ using the very same vectorization techniques as the SIMD variant.
This gives us a decent approximation of an upper limit for the expected performance for the implementations running on the CPU, i.e., the difference between this baseline and the SIMD variant is the overhead induced by evaluating Eq.~\eqref{eq:dynprogprob} instead of simply summing the elements $d_{ij}$.

From Figs.~\ref{fig:benchmark_laptop} we see that the performance gain between Alg.~\ref{alg:vanilla} and \ref{alg:linear} is small but slightly increases when $P$ gets large.
In contrast, the SIMD variant comes close to our baseline and offers a constant improvement of roughly $19\times$ during the first and even up to $22\times$ during the second experiment.
We found that instruction level parallelism and an improved memory locality by packing $B=2 \times 16$ floating point numbers into 512-bit vector registers and using AVX-512 instructions explain this impressive boost.
Taking into account the large number of available threads of the GPU, the results of the CUDA implementation is less impressive and only outperforms the SIMD algorithm running on a single CPU core for large values of $N$.

We leave it to future works to experiment with more sophisticated scheduling approaches that replaces our na\"ive approach of artificially repeating points to match the $P_i = \tilde P \; \forall \, 1 \le i \le B$ requirement.
Furthermore, we conjecture that our algorithms can easily be extended to allow for MC sampling, e.g., to estimate the Fr\'echet distances for uncertain curves \citep{buchin23}.

\section{Summary}
\label{sec:summary}
In this paper, we have introduced the Fast Fr\'echet Distance algorithm in Alg.~\ref{alg:fast}, a recursion-free algorithm that calculates the discrete Fr\'echet distance with a linear memory overhead and can be implemented in four lines of code.
Our algorithm can easily be adopted, e.g., to estimate the distances between batches of polygonal curves and a reference curve in parallel.
We have shown benchmarks of implementations that efficiently utilize SIMD vector registers on a CPU and the parallelization potentials of a GPGPU.
Besides a possible application to uncertain curves, we further conjecture that even for sophisticated approaches that, for instance, use heuristics to avoid computing the full (discrete) Fr\'echet distance when not needed, will still benefit from our results as they are probably still bottlenecked by those cases where the heuristic fails \citep{buchin23,baldus17,buchin17b,duetsch17}.

\section{Reproducibility}
An open source GitHub repository with the source code for reproducing our experiments is available on \href{https://github.com/avitase/fast_frechet}{\texttt{github.com/avitase/fast\_frechet}}.
We encourage other researchers to reproduce, test, extend, and apply our work.


\bibliography{sn-bibliography}


\begin{thebibliography}{65}
\ifx \bisbn   \undefined \def \bisbn  #1{ISBN #1}\fi
\ifx \binits  \undefined \def \binits#1{#1}\fi
\ifx \bauthor  \undefined \def \bauthor#1{#1}\fi
\ifx \batitle  \undefined \def \batitle#1{#1}\fi
\ifx \bjtitle  \undefined \def \bjtitle#1{#1}\fi
\ifx \bvolume  \undefined \def \bvolume#1{\textbf{#1}}\fi
\ifx \byear  \undefined \def \byear#1{#1}\fi
\ifx \bissue  \undefined \def \bissue#1{#1}\fi
\ifx \bfpage  \undefined \def \bfpage#1{#1}\fi
\ifx \blpage  \undefined \def \blpage #1{#1}\fi
\ifx \burl  \undefined \def \burl#1{\textsf{#1}}\fi
\ifx \doiurl  \undefined \def \doiurl#1{\url{https://doi.org/#1}}\fi
\ifx \betal  \undefined \def \betal{\textit{et al.}}\fi
\ifx \binstitute  \undefined \def \binstitute#1{#1}\fi
\ifx \binstitutionaled  \undefined \def \binstitutionaled#1{#1}\fi
\ifx \bctitle  \undefined \def \bctitle#1{#1}\fi
\ifx \beditor  \undefined \def \beditor#1{#1}\fi
\ifx \bpublisher  \undefined \def \bpublisher#1{#1}\fi
\ifx \bbtitle  \undefined \def \bbtitle#1{#1}\fi
\ifx \bedition  \undefined \def \bedition#1{#1}\fi
\ifx \bseriesno  \undefined \def \bseriesno#1{#1}\fi
\ifx \blocation  \undefined \def \blocation#1{#1}\fi
\ifx \bsertitle  \undefined \def \bsertitle#1{#1}\fi
\ifx \bsnm \undefined \def \bsnm#1{#1}\fi
\ifx \bsuffix \undefined \def \bsuffix#1{#1}\fi
\ifx \bparticle \undefined \def \bparticle#1{#1}\fi
\ifx \barticle \undefined \def \barticle#1{#1}\fi
\bibcommenthead
\ifx \bconfdate \undefined \def \bconfdate #1{#1}\fi
\ifx \botherref \undefined \def \botherref #1{#1}\fi
\ifx \url \undefined \def \url#1{\textsf{#1}}\fi
\ifx \bchapter \undefined \def \bchapter#1{#1}\fi
\ifx \bbook \undefined \def \bbook#1{#1}\fi
\ifx \bcomment \undefined \def \bcomment#1{#1}\fi
\ifx \oauthor \undefined \def \oauthor#1{#1}\fi
\ifx \citeauthoryear \undefined \def \citeauthoryear#1{#1}\fi
\ifx \endbibitem  \undefined \def \endbibitem {}\fi
\ifx \bconflocation  \undefined \def \bconflocation#1{#1}\fi
\ifx \arxivurl  \undefined \def \arxivurl#1{\textsf{#1}}\fi
\csname PreBibitemsHook\endcsname

\bibitem[\protect\citeauthoryear{Fr\'{e}chet}{1906}]{frechet06}
\begin{barticle}
\bauthor{\bsnm{Fr\'{e}chet}, \binits{M.M.}}:
\batitle{Sur quelques points du calcul fonctionnel}.
\bjtitle{Rend. Circ. Matem. Palermo}
\bvolume{22},
\bfpage{1}--\blpage{72}
(\byear{1906})
\doiurl{10.1007/BF03018603}
\end{barticle}
\endbibitem

\bibitem[\protect\citeauthoryear{Kantorovich}{1960}]{kantorovich60}
\begin{barticle}
\bauthor{\bsnm{Kantorovich}, \binits{L.V.}}:
\batitle{Mathematical methods of organizing and planning production}.
\bjtitle{Management Science}
\bvolume{6}(\bissue{4}),
\bfpage{366}--\blpage{422}
(\byear{1960})
\doiurl{10.1287/mnsc.6.4.366}
\end{barticle}
\endbibitem

\bibitem[\protect\citeauthoryear{Wasserstein}{1969}]{wasserstein69}
\begin{barticle}
\bauthor{\bsnm{Wasserstein}, \binits{L.N.}}:
\batitle{Markov processes over denumerable products of spaces, describing large
  systems of automata}.
\bjtitle{Probl. Peredachi Inf.}
\bvolume{5}(\bissue{3}),
\bfpage{64}--\blpage{72}
(\byear{1969})
\end{barticle}
\endbibitem

\bibitem[\protect\citeauthoryear{Dowson and Landau}{1982}]{dowson82}
\begin{barticle}
\bauthor{\bsnm{Dowson}, \binits{D.C.}},
\bauthor{\bsnm{Landau}, \binits{B.V.}}:
\batitle{The fr\'{e}chet distance between multivariate normal distributions}.
\bjtitle{Journal of Multivariate Analysis}
\bvolume{12}(\bissue{3}),
\bfpage{450}--\blpage{455}
(\byear{1982})
\doiurl{10.1016/0047-259X(82)90077-X}
\end{barticle}
\endbibitem

\bibitem[\protect\citeauthoryear{Heusel et~al.}{2017}]{heusel17}
\begin{bchapter}
\bauthor{\bsnm{Heusel}, \binits{M.}},
\bauthor{\bsnm{Ramsauer}, \binits{H.}},
\bauthor{\bsnm{Unterthiner}, \binits{T.}},
\bauthor{\bsnm{Nessler}, \binits{B.}},
\bauthor{\bsnm{Hochreiter}, \binits{S.}}:
\bctitle{{GAN}s trained by a two time-scale update rule converge to a local
  {N}ash equilibrium}.
In: \beditor{\bsnm{Guyon}, \binits{I.}},
\beditor{\bsnm{Luxburg}, \binits{U.}},
\beditor{\bsnm{Bengio}, \binits{S.}},
\beditor{\bsnm{Wallach}, \binits{H.}},
\beditor{\bsnm{Fergus}, \binits{R.}},
\beditor{\bsnm{Vishwanathan}, \binits{S.}},
\beditor{\bsnm{Garnett}, \binits{R.}} (eds.)
\bbtitle{Proceedings of the 31st International Conference on Neural Information
  Processing Systems}.
\bsertitle{NIPS'17},
pp. \bfpage{6629}--\blpage{6640}.
\bpublisher{Curran Associates Inc.},
\blocation{Red Hook, NY, USA}
(\byear{2017})
\end{bchapter}
\endbibitem

\bibitem[\protect\citeauthoryear{Kilgour et~al.}{2019}]{kilgour19}
\begin{bchapter}
\bauthor{\bsnm{Kilgour}, \binits{K.}},
\bauthor{\bsnm{Zuluaga}, \binits{M.}},
\bauthor{\bsnm{Roblek}, \binits{D.}},
\bauthor{\bsnm{Sharifi}, \binits{M.}}:
\bctitle{{F}r\'{e}chet audio distance: A reference-free metric for evaluating
  music enhancement algorithms}.
In: \bbtitle{Proc. Interspeech 2019},
pp. \bfpage{2350}--\blpage{2354}
(\byear{2019}).
\doiurl{10.21437/Interspeech.2019-2219}
\end{bchapter}
\endbibitem

\bibitem[\protect\citeauthoryear{Preuer et~al.}{2018}]{preuer18}
\begin{barticle}
\bauthor{\bsnm{Preuer}, \binits{K.}},
\bauthor{\bsnm{Renz}, \binits{P.}},
\bauthor{\bsnm{Unterthiner}, \binits{T.}},
\bauthor{\bsnm{Hochreiter}, \binits{S.}},
\bauthor{\bsnm{Klambauer}, \binits{G.}}:
\batitle{{F}r\'{e}chet chemnet distance: A metric for generative models for
  molecules in drug discovery}.
\bjtitle{Journal of Chemical Information and Modeling}
\bvolume{58}(\bissue{9}),
\bfpage{1736}--\blpage{1741}
(\byear{2018})
\doiurl{10.1021/acs.jcim.8b00234}
\end{barticle}
\endbibitem

\bibitem[\protect\citeauthoryear{Alt and Godau}{1992}]{alt92}
\begin{bchapter}
\bauthor{\bsnm{Alt}, \binits{H.}},
\bauthor{\bsnm{Godau}, \binits{M.}}:
\bctitle{Measuring the resemblance of polygonal curves}.
In: \beditor{\bsnm{Avis}, \binits{D.}} (ed.)
\bbtitle{Proceedings of the Eighth Annual Symposium on Computational Geometry}.
\bsertitle{SCG '92},
pp. \bfpage{102}--\blpage{109}.
\bpublisher{Association for Computing Machinery},
\blocation{New York, NY, USA}
(\byear{1992}).
\doiurl{10.1145/142675.142699}
\end{bchapter}
\endbibitem

\bibitem[\protect\citeauthoryear{Alt and Goday}{1995}]{alt95}
\begin{barticle}
\bauthor{\bsnm{Alt}, \binits{H.}},
\bauthor{\bsnm{Goday}, \binits{M.}}:
\batitle{Computing the {F}r\'{e}chet distance between two polygonal curves}.
\bjtitle{Int. J. Comput. Geometry Appl.}
\bvolume{5},
\bfpage{75}--\blpage{91}
(\byear{1995})
\doiurl{10.1142/S0218195995000064}
\end{barticle}
\endbibitem

\bibitem[\protect\citeauthoryear{Aronov et~al.}{2006}]{aronov06}
\begin{bchapter}
\bauthor{\bsnm{Aronov}, \binits{B.}},
\bauthor{\bsnm{Har-Peled}, \binits{S.}},
\bauthor{\bsnm{Knauer}, \binits{C.}},
\bauthor{\bsnm{Wang}, \binits{Y.}},
\bauthor{\bsnm{Wenk}, \binits{C.}}:
\bctitle{Fr\'{e}chet distance for curves, revisited}.
In: \beditor{\bsnm{Azar}, \binits{Y.}},
\beditor{\bsnm{Erlebach}, \binits{T.}} (eds.)
\bbtitle{Algorithms -- ESA 2006},
pp. \bfpage{52}--\blpage{63}.
\bpublisher{Springer},
\blocation{Berlin, Heidelberg}
(\byear{2006}).
\doiurl{10.1007/11841036_8}
\end{bchapter}
\endbibitem

\bibitem[\protect\citeauthoryear{Avraham et~al.}{2014}]{avraham14}
\begin{bchapter}
\bauthor{\bsnm{Avraham}, \binits{R.B.}},
\bauthor{\bsnm{Filtser}, \binits{O.}},
\bauthor{\bsnm{Kaplan}, \binits{H.}},
\bauthor{\bsnm{Katz}, \binits{M.J.}},
\bauthor{\bsnm{Sharir}, \binits{M.}}:
\bctitle{The discrete {F}r\'{e}chet distance with shortcuts via approximate
  distance counting and selection}.
In: \beditor{\bsnm{Cheng}, \binits{S.-W.}},
\beditor{\bsnm{Devillers}, \binits{O.}} (eds.)
\bbtitle{Proceedings of the Thirtieth Annual Symposium on Computational
  Geometry}.
\bsertitle{SOCG '14},
pp. \bfpage{377}--\blpage{386}.
\bpublisher{Association for Computing Machinery},
\blocation{New York, NY, USA}
(\byear{2014}).
\doiurl{10.1145/2582112.2582155}
\end{bchapter}
\endbibitem

\bibitem[\protect\citeauthoryear{Agarwal et~al.}{2014}]{agarwal14}
\begin{barticle}
\bauthor{\bsnm{Agarwal}, \binits{P.K.}},
\bauthor{\bsnm{Avraham}, \binits{R.B.}},
\bauthor{\bsnm{Kaplan}, \binits{H.}},
\bauthor{\bsnm{Sharir}, \binits{M.}}:
\batitle{Computing the discrete {F}r\'{e}chet distance in subquadratic time}.
\bjtitle{SIAM J. Comput.}
\bvolume{43}(\bissue{2}),
\bfpage{429}--\blpage{449}
(\byear{2014})
\doiurl{10.1137/130920526}
\end{barticle}
\endbibitem

\bibitem[\protect\citeauthoryear{Bringmann et~al.}{2019}]{bringmann19}
\begin{bchapter}
\bauthor{\bsnm{Bringmann}, \binits{K.}},
\bauthor{\bsnm{K\"{u}nnemann}, \binits{M.}},
\bauthor{\bsnm{Nusser}, \binits{A.}}:
\bctitle{Walking the dog fast in practice: Algorithm engineering of the
  {F}r\'{e}chet distance}.
In: \beditor{\bsnm{Barequet}, \binits{G.}},
\beditor{\bsnm{Wang}, \binits{Y.}} (eds.)
\bbtitle{35th International Symposium on Computational Geometry}.
\bsertitle{Leibniz International Proceedings in Informatics (LIPIcs)},
vol. \bseriesno{129},
pp. \bfpage{1}--\blpage{21}.
\bpublisher{Schloss Dagstuhl--Leibniz-Zentrum f{\"u}r Informatik (SoCG 2019)},
\blocation{Dagstuhl, Germany}
(\byear{2019}).
\doiurl{10.4230/LIPIcs.SoCG.2019.17}
\end{bchapter}
\endbibitem

\bibitem[\protect\citeauthoryear{Eiter and Mannila}{1994}]{eiter94}
\begin{botherref}
\oauthor{\bsnm{Eiter}, \binits{T.}},
\oauthor{\bsnm{Mannila}, \binits{H.}}:
Computing Discrete {F}r\'{e}chet Distance.
Technical Report CDTR 94/64
(1994)
\end{botherref}
\endbibitem

\bibitem[\protect\citeauthoryear{Agarwal et~al.}{2005}]{agarwal05}
\begin{botherref}
\oauthor{\bsnm{Agarwal}, \binits{P.K.}},
\oauthor{\bsnm{Har-Peled}, \binits{S.}},
\oauthor{\bsnm{Mustafa}, \binits{N.H.}},
\oauthor{\bsnm{Wang}, \binits{Y.}}:
Near-linear time approximation algorithms for curve simplification.
Algorithmica
\textbf{42}(3-4)
(2005)
\doiurl{10.1007/s00453-005-1165-y}
\end{botherref}
\endbibitem

\bibitem[\protect\citeauthoryear{Bereg et~al.}{2008}]{bereg08}
\begin{bchapter}
\bauthor{\bsnm{Bereg}, \binits{S.}},
\bauthor{\bsnm{Jiang}, \binits{M.}},
\bauthor{\bsnm{Wang}, \binits{W.}},
\bauthor{\bsnm{Yang}, \binits{B.}},
\bauthor{\bsnm{Zhu}, \binits{B.}}:
\bctitle{Simplifying 3{D} polygonal chains under the discrete {F}r\'{e}chet
  distance}.
In: \beditor{\bsnm{Laber}, \binits{E.S.}},
\beditor{\bsnm{Bornstein}, \binits{C.}},
\beditor{\bsnm{Nogueira}, \binits{L.T.}},
\beditor{\bsnm{Faria}, \binits{L.}} (eds.)
\bbtitle{Proceedings of the 8th Latin American Conference on Theoretical
  Informatics}.
\bsertitle{LATIN '08},
pp. \bfpage{630}--\blpage{641}.
\bpublisher{Springer},
\blocation{Berlin, Heidelberg}
(\byear{2008}).
\doiurl{10.1007/978-3-540-78773-0_54}
\end{bchapter}
\endbibitem

\bibitem[\protect\citeauthoryear{van Kreveld et~al.}{2018}]{kreveld18}
\begin{bchapter}
\bauthor{\bsnm{Kreveld}, \binits{M.}},
\bauthor{\bsnm{L\"{o}ffler}, \binits{M.}},
\bauthor{\bsnm{Wiratma}, \binits{L.}}:
\bctitle{On optimal polyline simplification using the hausdorff and fr\'{e}chet
  distance}.
In: \beditor{\bsnm{Speckmann}, \binits{B.}},
\beditor{\bsnm{T\'{o}th}, \binits{C.D.}} (eds.)
\bbtitle{34th International Symposium on Computational Geometry (SoCG 2018)}.
\bsertitle{Leibniz International Proceedings in Informatics (LIPIcs)},
vol. \bseriesno{99},
pp. \bfpage{1}--\blpage{14}.
\bpublisher{Schloss Dagstuhl--Leibniz-Zentrum f\"{u}r Informatik},
\blocation{Dagstuhl, Germany}
(\byear{2018}).
\doiurl{10.4230/LIPIcs.SoCG.2018.56}
\end{bchapter}
\endbibitem

\bibitem[\protect\citeauthoryear{van~de Kerkhof et~al.}{2019}]{kerkhof19}
\begin{bchapter}
\bauthor{\bsnm{Kerkhof}, \binits{M.}},
\bauthor{\bsnm{Kostitsyna}, \binits{I.}},
\bauthor{\bsnm{L\"{o}ffler}, \binits{M.}},
\bauthor{\bsnm{Mirzanezhad}, \binits{M.}},
\bauthor{\bsnm{Wenk}, \binits{C.}}:
\bctitle{Global curve simplification}.
In: \beditor{\bsnm{Bender}, \binits{M.A.}},
\beditor{\bsnm{Svensson}, \binits{O.}},
\beditor{\bsnm{Herman}, \binits{G.}} (eds.)
\bbtitle{27th Annual European Symposium on Algorithms (ESA 2019)}.
\bsertitle{Leibniz International Proceedings in Informatics (LIPIcs)},
vol. \bseriesno{144},
pp. \bfpage{1}--\blpage{14}.
\bpublisher{Schloss Dagstuhl--Leibniz-Zentrum f\"{u}r Informatik},
\blocation{Dagstuhl, Germany}
(\byear{2019}).
\doiurl{10.4230/LIPIcs.ESA.2019.67}
\end{bchapter}
\endbibitem

\bibitem[\protect\citeauthoryear{Alt et~al.}{2003}]{alt03a}
\begin{barticle}
\bauthor{\bsnm{Alt}, \binits{H.}},
\bauthor{\bsnm{Efrat}, \binits{A.}},
\bauthor{\bsnm{Rote}, \binits{G.}},
\bauthor{\bsnm{Wenk}, \binits{C.}}:
\batitle{Matching planar maps}.
\bjtitle{J. Algorithms}
\bvolume{49}(\bissue{2}),
\bfpage{262}--\blpage{283}
(\byear{2003})
\doiurl{10.1016/S0196-6774(03)00085-3}
\end{barticle}
\endbibitem

\bibitem[\protect\citeauthoryear{Driemel et~al.}{2012}]{driemel12}
\begin{barticle}
\bauthor{\bsnm{Driemel}, \binits{A.}},
\bauthor{\bsnm{Har-Peled}, \binits{S.}},
\bauthor{\bsnm{Wenk}, \binits{C.}}:
\batitle{Approximating the {F}r\'{e}chet distance for realistic curves in near
  linear time}.
\bjtitle{Discrete Comput. Geom.}
\bvolume{48}(\bissue{1}),
\bfpage{94}--\blpage{127}
(\byear{2012})
\doiurl{10.1007/s00454-012-9402-z}
\end{barticle}
\endbibitem

\bibitem[\protect\citeauthoryear{Sharma et~al.}{2017}]{sharma17}
\begin{barticle}
\bauthor{\bsnm{Sharma}, \binits{K.P.}},
\bauthor{\bsnm{Pooniaa}, \binits{R.C.}},
\bauthor{\bsnm{Sunda}, \binits{S.}}:
\batitle{Map matching algorithm: curve simplification for {F}r\'{e}chet
  distance computing and precise navigation on road network using {RTKLIB}}.
\bjtitle{Cluster Computing}
\bvolume{22}(\bissue{6}),
\bfpage{3351}--\blpage{13359}
(\byear{2017})
\doiurl{10.1007/s10586-018-1910-z}
\end{barticle}
\endbibitem

\bibitem[\protect\citeauthoryear{Buchin et~al.}{2008}]{buchin08b}
\begin{bchapter}
\bauthor{\bsnm{Buchin}, \binits{K.}},
\bauthor{\bsnm{Buchin}, \binits{M.}},
\bauthor{\bsnm{Gudmundsson}, \binits{J.}},
\bauthor{\bsnm{L{\"o}ffler}, \binits{M.}},
\bauthor{\bsnm{Luo}, \binits{J.}}:
\bctitle{Detecting commuting patterns by clustering subtrajectories}.
In: \beditor{\bsnm{Hong}, \binits{S.-H.}},
\beditor{\bsnm{Nagamochi}, \binits{H.}},
\beditor{\bsnm{Fukunaga}, \binits{T.}} (eds.)
\bbtitle{Algorithms and Computation},
pp. \bfpage{644}--\blpage{655}.
\bpublisher{Springer},
\blocation{Berlin, Heidelberg}
(\byear{2008}).
\doiurl{10.1007/978-3-540-92182-0_57}
\end{bchapter}
\endbibitem

\bibitem[\protect\citeauthoryear{Besse et~al.}{2016}]{besse16}
\begin{barticle}
\bauthor{\bsnm{Besse}, \binits{P.C.}},
\bauthor{\bsnm{Guillouet}, \binits{B.}},
\bauthor{\bsnm{Loubes}, \binits{J.-M.}},
\bauthor{\bsnm{Royer}, \binits{F.}}:
\batitle{Review and perspective for distance-based clustering of vehicle
  trajectories}.
\bjtitle{IEEE Transactions on Intelligent Transportation Systems}
\bvolume{17}(\bissue{11}),
\bfpage{3306}--\blpage{3317}
(\byear{2016})
\doiurl{10.1109/TITS.2016.2547641}
\end{barticle}
\endbibitem

\bibitem[\protect\citeauthoryear{Buchin et~al.}{2019a}]{buchin19b}
\begin{bchapter}
\bauthor{\bsnm{Buchin}, \binits{K.}},
\bauthor{\bsnm{Driemel}, \binits{A.}},
\bauthor{\bsnm{Gudmundsson}, \binits{J.}},
\bauthor{\bsnm{Horton}, \binits{M.}},
\bauthor{\bsnm{Kostitsyna}, \binits{I.}},
\bauthor{\bsnm{L\"{o}ffler}, \binits{M.}},
\bauthor{\bsnm{Struijs}, \binits{M.}}:
\bctitle{Approximating $(k, \ell)$-center clustering for curves}.
In: \beditor{\bsnm{Chan}, \binits{T.M.}} (ed.)
\bbtitle{Proceedings of the 2019 Annual ACM-SIAM Symposium on Discrete
  Algorithms (SODA)},
pp. \bfpage{2922}--\blpage{2938}.
\bpublisher{Society for Industrial and Applied Mathematics},
\blocation{Philadelphia, PA, USA}
(\byear{2019}).
\doiurl{10.1137/1.9781611975482.181}
\end{bchapter}
\endbibitem

\bibitem[\protect\citeauthoryear{Buchin et~al.}{2019b}]{buchin19c}
\begin{bchapter}
\bauthor{\bsnm{Buchin}, \binits{K.}},
\bauthor{\bsnm{Driemel}, \binits{A.}},
\bauthor{\bsnm{L'Isle}, \binits{N.}},
\bauthor{\bsnm{Nusser}, \binits{A.}}:
\bctitle{Klcluster: Center-based clustering of trajectories}.
In: \beditor{\bsnm{Banaei-Kashani}, \binits{F.}},
\beditor{\bsnm{Trajcevski}, \binits{G.}},
\beditor{\bsnm{H.~G\"{u}ting}, \binits{R.}},
\beditor{\bsnm{Kulik}, \binits{L.}},
\beditor{\bsnm{Newsam}, \binits{S.}} (eds.)
\bbtitle{Proceedings of the 27th ACM SIGSPATIAL International Conference on
  Advances in Geographic Information Systems}.
\bsertitle{SIGSPATIAL '19},
pp. \bfpage{496}--\blpage{499}.
\bpublisher{Association for Computing Machinery},
\blocation{New York, NY, USA}
(\byear{2019}).
\doiurl{10.1145/3347146.3359111}
\end{bchapter}
\endbibitem

\bibitem[\protect\citeauthoryear{Wylie and Zhu}{2012}]{wylie12}
\begin{bchapter}
\bauthor{\bsnm{Wylie}, \binits{T.}},
\bauthor{\bsnm{Zhu}, \binits{B.}}:
\bctitle{A polynomial time solution for protein chain pair simplification under
  the discrete {F}r\'{e}chet distance}.
In: \beditor{\bsnm{Bleris}, \binits{L.}},
\beditor{\bsnm{M{\u{a}}ndoiu}, \binits{I.}},
\beditor{\bsnm{Schwartz}, \binits{R.}},
\beditor{\bsnm{Wang}, \binits{J.}} (eds.)
\bbtitle{Bioinformatics Research and Applications},
pp. \bfpage{287}--\blpage{298}.
\bpublisher{Springer},
\blocation{Berlin, Heidelberg}
(\byear{2012}).
\doiurl{10.1007/978-3-642-30191-9_27}
\end{bchapter}
\endbibitem

\bibitem[\protect\citeauthoryear{Brakatsoulas et~al.}{2005}]{brakatsoulas05}
\begin{bchapter}
\bauthor{\bsnm{Brakatsoulas}, \binits{S.}},
\bauthor{\bsnm{Pfoser}, \binits{D.}},
\bauthor{\bsnm{Salas}, \binits{R.}},
\bauthor{\bsnm{Wenk}, \binits{C.}}:
\bctitle{On map-matching vehicle tracking data}.
In: \beditor{\bsnm{Bratbergsengen}, \binits{K.}} (ed.)
\bbtitle{Proceedings of the 31st International Conference on Very Large Data
  Bases}.
\bsertitle{VLDB '05},
pp. \bfpage{853}--\blpage{864}.
\bpublisher{VLDB Endowment},
\blocation{Trondheim, Norway}
(\byear{2005})
\end{bchapter}
\endbibitem

\bibitem[\protect\citeauthoryear{Buchin et~al.}{2008}]{buchin08a}
\begin{bchapter}
\bauthor{\bsnm{Buchin}, \binits{K.}},
\bauthor{\bsnm{Buchin}, \binits{M.}},
\bauthor{\bsnm{Gudmundsson}, \binits{J.}}:
\bctitle{Detecting single file movement}.
In: \beditor{\bsnm{Aref}, \binits{W.G.}},
\beditor{\bsnm{Mokbel}, \binits{M.F.}},
\beditor{\bsnm{Schneider}, \binits{M.}} (eds.)
\bbtitle{Proceedings of the 16th {ACM} {SIGSPATIAL} International Conference on
  Advances in Geographic Information Systems}.
\bsertitle{GIS '08}.
\bpublisher{Association for Computing Machinery},
\blocation{New York, NY, USA}
(\byear{2008}).
\doiurl{10.1145/1463434.1463476}
\end{bchapter}
\endbibitem

\bibitem[\protect\citeauthoryear{Mascret et~al.}{2006}]{mascret06}
\begin{bchapter}
\bauthor{\bsnm{Mascret}, \binits{A.}},
\bauthor{\bsnm{Devogele}, \binits{T.}},
\bauthor{\bsnm{Le~Berre}, \binits{I.}},
\bauthor{\bsnm{H\'{e}naff}, \binits{A.}}:
\bctitle{Coastline matching process based on the discrete {F}r\'{e}chet
  distance}.
In: \beditor{\bsnm{Riedl}, \binits{A.}},
\beditor{\bsnm{Kainz}, \binits{W.}},
\beditor{\bsnm{Elmes}, \binits{G.A.}} (eds.)
\bbtitle{Progress in Spatial Data Handling: 12th International Symposium on
  Spatial Data Handling},
pp. \bfpage{383}--\blpage{400}.
\bpublisher{Springer},
\blocation{Berlin, Heidelberg}
(\byear{2006}).
\doiurl{10.1007/3-540-35589-8_25}
\end{bchapter}
\endbibitem

\bibitem[\protect\citeauthoryear{Vintsyuk}{1968}]{vintsyuk68}
\begin{barticle}
\bauthor{\bsnm{Vintsyuk}, \binits{T.K.}}:
\batitle{Speech discrimination by dynamic programming}.
\bjtitle{Cybernetics}
\bvolume{4},
\bfpage{52}--\blpage{57}
(\byear{1968})
\doiurl{10.1007/BF01074755}
\end{barticle}
\endbibitem

\bibitem[\protect\citeauthoryear{Sankoff and Kruskal}{1983}]{sankoff83}
\begin{bbook}
\bauthor{\bsnm{Sankoff}, \binits{D.}},
\bauthor{\bsnm{Kruskal}, \binits{J.B.}}:
\bbtitle{Time Warps, String Edits, and Macromolecules: The Theory and Practice
  of Sequence Comparison}.
\bpublisher{Addison-Wesley Publishing Company, Inc.},
\blocation{Boston, MA, USA}
(\byear{1983})
\end{bbook}
\endbibitem

\bibitem[\protect\citeauthoryear{Buchin et~al.}{2017}]{buchin17a}
\begin{botherref}
\oauthor{\bsnm{Buchin}, \binits{K.}},
\oauthor{\bsnm{Buchin}, \binits{M.}},
\oauthor{\bsnm{Meulemans}, \binits{W.}},
\oauthor{\bsnm{Mulzer}, \binits{W.}}:
Four {S}oviets walk the dog: Improved bounds for computing the {F}r\'{e}chet
  distance.
Discrete Comput Geom
\textbf{58}
(2017)
\doiurl{10.1007/s00454-017-9878-7}
\end{botherref}
\endbibitem

\bibitem[\protect\citeauthoryear{Bringmann}{2014}]{bringmann14}
\begin{bchapter}
\bauthor{\bsnm{Bringmann}, \binits{K.}}:
\bctitle{Why walking the dog takes time: {F}r\'{e}chet distance has no strongly
  subquadratic algorithms unless {SETH} fails}.
In: \bbtitle{2014 IEEE 55th Annual Symposium on Foundations of Computer
  Science},
pp. \bfpage{661}--\blpage{670}.
\bpublisher{IEEE},
\blocation{Philadelphia, PA, USA}
(\byear{2014}).
\doiurl{10.1109/FOCS.2014.76}
\end{bchapter}
\endbibitem

\bibitem[\protect\citeauthoryear{Buchin et~al.}{2019}]{buchin19a}
\begin{bchapter}
\bauthor{\bsnm{Buchin}, \binits{K.}},
\bauthor{\bsnm{Ophelders}, \binits{T.}},
\bauthor{\bsnm{Speckmann}, \binits{B.}}:
\bctitle{{SETH} says: Weak {F}r\'{e}chet distance is faster, but only if it is
  continuous and in one dimension}.
In: \beditor{\bsnm{Chan}, \binits{T.M.}} (ed.)
\bbtitle{Proceedings of the 2019 Annual ACM-SIAM Symposium on Discrete
  Algorithms (SODA)},
pp. \bfpage{2887}--\blpage{2901}.
\bpublisher{Society for Industrial and Applied Mathematics},
\blocation{Philadelphia, PA, USA}
(\byear{2019}).
\doiurl{10.1137/1.9781611975482.179}
\end{bchapter}
\endbibitem

\bibitem[\protect\citeauthoryear{Sriraghavendra
  et~al.}{2007}]{sriraghavendra07}
\begin{bchapter}
\bauthor{\bsnm{Sriraghavendra}, \binits{E.}},
\bauthor{\bsnm{K.}, \binits{K.}},
\bauthor{\bsnm{Bhattacharyya}, \binits{C.}}:
\bctitle{Fr\'{e}chet distance based approach for searching online handwritten
  documents}.
In: \bbtitle{Ninth International Conference on Document Analysis and
  Recognition}.
\bsertitle{ICDAR '07},
vol. \bseriesno{1},
pp. \bfpage{461}--\blpage{465}.
\bpublisher{IEEE},
\blocation{Curitiba, Brazil}
(\byear{2007}).
\doiurl{10.1109/ICDAR.2007.4378752}
\end{bchapter}
\endbibitem

\bibitem[\protect\citeauthoryear{Jian and Zhu}{2008}]{jiang08}
\begin{barticle}
\bauthor{\bsnm{Jian}, \binits{M.}},
\bauthor{\bsnm{Zhu}, \binits{B.}}:
\batitle{Protein structure alignment with discrete {F}r\'{e}chet distance}.
\bjtitle{J. Bioinform. Comput. Biol.}
\bvolume{06}(\bissue{01}),
\bfpage{51}--\blpage{64}
(\byear{2008})
\doiurl{10.1142/S0219720008003278}
\end{barticle}
\endbibitem

\bibitem[\protect\citeauthoryear{Zhu}{2007}]{zhu07}
\begin{barticle}
\bauthor{\bsnm{Zhu}, \binits{B.}}:
\batitle{Protein local structure alignment under the discrete {F}r\'{e}chet
  distance}.
\bjtitle{J. Comput. Biol.}
\bvolume{14}(\bissue{10}),
\bfpage{1343}--\blpage{1351}
(\byear{2007})
\doiurl{10.1089/cmb.2007.0156}
\end{barticle}
\endbibitem

\bibitem[\protect\citeauthoryear{Li et~al.}{2020}]{li20}
\begin{bchapter}
\bauthor{\bsnm{Li}, \binits{S.}},
\bauthor{\bsnm{Liang}, \binits{M.}},
\bauthor{\bsnm{Liu}, \binits{R.W.}}:
\bctitle{Vessel trajectory similarity measure based on deep convolutional
  autoencoder}.
In: \bbtitle{2020 5th IEEE International Conference on Big Data Analytics
  (ICBDA)},
pp. \bfpage{333}--\blpage{338}
(\byear{2020}).
\doiurl{10.1109/ICBDA49040.2020.9101289}
\end{bchapter}
\endbibitem

\bibitem[\protect\citeauthoryear{Liu et~al.}{2023}]{liu23}
\begin{botherref}
\oauthor{\bsnm{Liu}, \binits{C.}},
\oauthor{\bsnm{Zhang}, \binits{S.}},
\oauthor{\bsnm{Cao}, \binits{L.}},
\oauthor{\bsnm{Lin}, \binits{B.}}:
The identification of ship trajectories using multi-attribute compression and
  similarity metrics.
Journal of Marine Science and Engineering
\textbf{11}(10)
(2023)
\doiurl{10.3390/jmse11102005}
\end{botherref}
\endbibitem

\bibitem[\protect\citeauthoryear{Luo and Zeng}{2023}]{luo23}
\begin{botherref}
\oauthor{\bsnm{Luo}, \binits{S.}},
\oauthor{\bsnm{Zeng}, \binits{W.}}:
Vessel trajectory similarity computation based on heterogeneous {G}raph
  {N}eural {N}etwork.
Journal of Marine Science and Engineering
\textbf{11}(7)
(2023)
\doiurl{10.3390/jmse11071318}
\end{botherref}
\endbibitem

\bibitem[\protect\citeauthoryear{\v{S}akan et~al.}{2023}]{sakan23}
\begin{botherref}
\oauthor{\bsnm{\v{S}akan}, \binits{D.}},
\oauthor{\bsnm{\v{Z}u\v{s}kin}, \binits{S.}},
\oauthor{\bsnm{Rudan}, \binits{I.}},
\oauthor{\bsnm{Br\v{c}i\'{c}}, \binits{D.}}:
Container ship fleet route evaluation and similarity measurement between two
  shipping line ports.
Journal of Marine Science and Engineering
\textbf{11}(2)
(2023)
\doiurl{10.3390/jmse11020400}
\end{botherref}
\endbibitem

\bibitem[\protect\citeauthoryear{Cao et~al.}{2018}]{cao18}
\begin{bchapter}
\bauthor{\bsnm{Cao}, \binits{J.}},
\bauthor{\bsnm{Liang}, \binits{M.}},
\bauthor{\bsnm{Li}, \binits{Y.}},
\bauthor{\bsnm{Chen}, \binits{J.}},
\bauthor{\bsnm{Li}, \binits{H.}},
\bauthor{\bsnm{Liu}, \binits{R.W.}},
\bauthor{\bsnm{Liu}, \binits{J.}}:
\bctitle{{PCA}-based hierarchical clustering of {AIS} trajectories with
  automatic extraction of clusters}.
In: \bbtitle{2018 IEEE 3rd International Conference on Big Data Analysis
  (ICBDA)},
pp. \bfpage{448}--\blpage{452}
(\byear{2018}).
\doiurl{10.1109/ICBDA.2018.8367725}
\end{bchapter}
\endbibitem

\bibitem[\protect\citeauthoryear{Roberts}{2018}]{roberts18}
\begin{barticle}
\bauthor{\bsnm{Roberts}, \binits{S.A.}}:
\batitle{A shape-based local spatial association measure ({LISShA}): A case
  study in maritime anomaly detection}.
\bjtitle{Geographical Analysis}
\bvolume{51}(\bissue{4}),
\bfpage{403}--\blpage{425}
(\byear{2018})
\doiurl{10.1111/gean.12178}
\end{barticle}
\endbibitem

\bibitem[\protect\citeauthoryear{Flynn}{1972}]{flynn72}
\begin{barticle}
\bauthor{\bsnm{Flynn}, \binits{M.J.}}:
\batitle{Some computer organizations and their effectiveness}.
\bjtitle{IEEE Transactions on Computers}
\bvolume{C-21}(\bissue{9}),
\bfpage{948}--\blpage{960}
(\byear{1972})
\doiurl{10.1109/TC.1972.5009071}
\end{barticle}
\endbibitem

\bibitem[\protect\citeauthoryear{Iverson}{1962}]{iverson62}
\begin{bbook}
\bauthor{\bsnm{Iverson}, \binits{K.E.}}:
\bbtitle{A Programming Language}.
\bpublisher{John Wiley \& Sons, Inc.},
\blocation{Hoboken, NJ, USA}
(\byear{1962})
\end{bbook}
\endbibitem

\bibitem[\protect\citeauthoryear{Iverson}{1980}]{iverson79}
\begin{barticle}
\bauthor{\bsnm{Iverson}, \binits{K.E.}}:
\batitle{Notation as a tool of thought}.
\bjtitle{Commun. ACM}
\bvolume{23}(\bissue{8}),
\bfpage{444}--\blpage{465}
(\byear{1980})
\doiurl{10.1145/358896.358899}
\end{barticle}
\endbibitem

\bibitem[\protect\citeauthoryear{Ahn et~al.}{2010}]{ahn10}
\begin{bchapter}
\bauthor{\bsnm{Ahn}, \binits{H.-K.}},
\bauthor{\bsnm{Knauer}, \binits{C.}},
\bauthor{\bsnm{Scherfenberg}, \binits{M.}},
\bauthor{\bsnm{Schlipf}, \binits{L.}},
\bauthor{\bsnm{Vigneron}, \binits{A.}}:
\bctitle{Computing the discrete {F}r\'{e}chet distance with imprecise input}.
In: \beditor{\bsnm{Cheong}, \binits{O.}},
\beditor{\bsnm{Chwa}, \binits{K.-Y.}},
\beditor{\bsnm{Park}, \binits{K.}} (eds.)
\bbtitle{Algorithms and Computation},
pp. \bfpage{422}--\blpage{433}.
\bpublisher{Springer},
\blocation{Berlin, Heidelberg}
(\byear{2010}).
\doiurl{10.1007/978-3-642-17514-5_36}
\end{bchapter}
\endbibitem

\bibitem[\protect\citeauthoryear{Gudmundsson et~al.}{2018}]{gudmundsson18}
\begin{bchapter}
\bauthor{\bsnm{Gudmundsson}, \binits{J.}},
\bauthor{\bsnm{Mirzanezhad}, \binits{M.}},
\bauthor{\bsnm{Mohades}, \binits{A.}},
\bauthor{\bsnm{Wenk}, \binits{C.}}:
\bctitle{Fast {F}r\'{e}chet distance between curves with long edges}.
In: \beditor{\bsnm{Balakrishnan}, \binits{P.}},
\beditor{\bsnm{McMahan}, \binits{R.P.}} (eds.)
\bbtitle{Proceedings of the 3rd International Workshop on Interactive and
  Spatial Computing}.
\bsertitle{IWISC '18},
pp. \bfpage{52}--\blpage{58}.
\bpublisher{Association for Computing Machinery},
\blocation{New York, NY, USA}
(\byear{2018}).
\doiurl{10.1145/3191801.3191811}
\end{bchapter}
\endbibitem

\bibitem[\protect\citeauthoryear{Kevin~Buchin and Gudmundsson}{2010}]{buchin10}
\begin{barticle}
\bauthor{\bsnm{Kevin~Buchin}, \binits{M.B.}},
\bauthor{\bsnm{Gudmundsson}, \binits{J.}}:
\batitle{Constrained free space diagrams: a tool for trajectory analysis}.
\bjtitle{International Journal of Geographical Information Science}
\bvolume{24}(\bissue{7}),
\bfpage{1101}--\blpage{1125}
(\byear{2010})
\doiurl{10.1080/13658810903569598}
\end{barticle}
\endbibitem

\bibitem[\protect\citeauthoryear{Maheshwari et~al.}{2011}]{maheshwari11}
\begin{barticle}
\bauthor{\bsnm{Maheshwari}, \binits{A.}},
\bauthor{\bsnm{Sack}, \binits{J.-R.}},
\bauthor{\bsnm{Shabaz}, \binits{K.}},
\bauthor{\bsnm{Zarrabi-Zadeh}, \binits{H.}}:
\batitle{Fr\'{e}chet distance with speed limits}.
\bjtitle{Computational Geometry}
\bvolume{44}(\bissue{2}),
\bfpage{110}--\blpage{120}
(\byear{2011})
\doiurl{10.1016/j.comgeo.2010.09.008}
\end{barticle}
\endbibitem

\bibitem[\protect\citeauthoryear{Kogge and Stone}{1973}]{kogge73}
\begin{barticle}
\bauthor{\bsnm{Kogge}, \binits{P.M.}},
\bauthor{\bsnm{Stone}, \binits{H.S.}}:
\batitle{A parallel algorithm for the efficient solution of a general class of
  recurrence equations}.
\bjtitle{IEEE Transactions on Computers}
\bvolume{C-22}(\bissue{8}),
\bfpage{786}--\blpage{793}
(\byear{1973})
\doiurl{10.1109/TC.1973.5009159}
\end{barticle}
\endbibitem

\bibitem[\protect\citeauthoryear{Brent and Kung}{1982}]{brent82}
\begin{barticle}
\bauthor{\bsnm{Brent}},
\bauthor{\bsnm{Kung}}:
\batitle{A regular layout for parallel adders}.
\bjtitle{IEEE Transactions on Computers}
\bvolume{C-31}(\bissue{3}),
\bfpage{260}--\blpage{264}
(\byear{1982})
\doiurl{10.1109/TC.1982.1675982}
\end{barticle}
\endbibitem

\bibitem[\protect\citeauthoryear{Hoberock}{2019}]{N4808}
\begin{botherref}
\oauthor{\bsnm{Hoberock}, \binits{J.}}:
Working Draft, {C++} Extensions for Parallelism Version 2.
N4808
(2019)
\end{botherref}
\endbibitem

\bibitem[\protect\citeauthoryear{Buchin et~al.}{2023}]{buchin23}
\begin{botherref}
\oauthor{\bsnm{Buchin}, \binits{K.}},
\oauthor{\bsnm{Fan}, \binits{C.}},
\oauthor{\bsnm{L\"{o}ffler}, \binits{M.}},
\oauthor{\bsnm{Popov}, \binits{A.}},
\oauthor{\bsnm{Raichel}, \binits{B.}},
\oauthor{\bsnm{Roeloffzen}, \binits{M.}}:
Fr\'{e}chet distance for uncertain curves.
{ACM} Trans. Algorithms
\textbf{19}(3)
(2023)
\doiurl{10.1145/3597640}
\end{botherref}
\endbibitem

\bibitem[\protect\citeauthoryear{Baldus and Bringmann}{2017}]{baldus17}
\begin{bchapter}
\bauthor{\bsnm{Baldus}, \binits{J.}},
\bauthor{\bsnm{Bringmann}, \binits{K.}}:
\bctitle{A fast implementation of near neighbors queries for fr\'{e}chet
  distance ({GIS Cup})}.
In: \bbtitle{Proceedings of the 25th ACM SIGSPATIAL International Conference on
  Advances in Geographic Information Systems}.
\bsertitle{SIGSPATIAL '17},
pp. \bfpage{1}--\blpage{4}.
\bpublisher{Association for Computing Machinery},
\blocation{New York, NY, USA}
(\byear{2017}).
\doiurl{10.1145/3139958.3140062}
\end{bchapter}
\endbibitem

\bibitem[\protect\citeauthoryear{Buchin et~al.}{2017}]{buchin17b}
\begin{bchapter}
\bauthor{\bsnm{Buchin}, \binits{K.}},
\bauthor{\bsnm{Diez}, \binits{Y.}},
\bauthor{\bsnm{Diggelen}, \binits{T.}},
\bauthor{\bsnm{Meulemans}, \binits{W.}}:
\bctitle{Efficient trajectory queries under the fr\'{e}chet distance ({GIS
  Cup})}.
In: \bbtitle{Proceedings of the 25th ACM SIGSPATIAL International Conference on
  Advances in Geographic Information Systems}.
\bsertitle{SIGSPATIAL '17},
pp. \bfpage{1}--\blpage{4}.
\bpublisher{Association for Computing Machinery},
\blocation{New York, NY, USA}
(\byear{2017}).
\doiurl{10.1145/3139958.3140064}
\end{bchapter}
\endbibitem

\bibitem[\protect\citeauthoryear{D\"{u}tsch and Vahrenhold}{2017}]{duetsch17}
\begin{bchapter}
\bauthor{\bsnm{D\"{u}tsch}, \binits{F.}},
\bauthor{\bsnm{Vahrenhold}, \binits{J.}}:
\bctitle{A filter-and-refinement-algorithm for range queries based on the
  fr\'{e}chet distance ({GIS Cup})}.
In: \bbtitle{Proceedings of the 25th ACM SIGSPATIAL International Conference on
  Advances in Geographic Information Systems}.
\bsertitle{SIGSPATIAL '17}.
\bpublisher{Association for Computing Machinery},
\blocation{New York, NY, USA}
(\byear{2017}).
\doiurl{10.1145/3139958.3140063}
\end{bchapter}
\endbibitem

\bibitem[\protect\citeauthoryear{Salvador and Chan}{2007}]{salvador07}
\begin{barticle}
\bauthor{\bsnm{Salvador}, \binits{S.}},
\bauthor{\bsnm{Chan}, \binits{P.}}:
\batitle{Toward accurate dynamic time warping in linear time and space}.
\bjtitle{Intell. Data Anal.}
\bvolume{11}(\bissue{5}),
\bfpage{561}--\blpage{580}
(\byear{2007})
\end{barticle}
\endbibitem

\bibitem[\protect\citeauthoryear{Pr\"atzlich et~al.}{2016}]{praetzlich16}
\begin{bchapter}
\bauthor{\bsnm{Pr\"atzlich}, \binits{T.}},
\bauthor{\bsnm{Driedger}, \binits{J.}},
\bauthor{\bsnm{M\"uller}, \binits{M.}}:
\bctitle{Memory-restricted multiscale dynamic time warping}.
In: \bbtitle{2016 IEEE International Conference on Acoustics, Speech and Signal
  Processing (ICASSP)},
pp. \bfpage{569}--\blpage{573}
(\byear{2016}).
\doiurl{10.1109/ICASSP.2016.7471739}
\end{bchapter}
\endbibitem

\bibitem[\protect\citeauthoryear{Silva and Batista}{2016}]{silva16}
\begin{bchapter}
\bauthor{\bsnm{Silva}, \binits{D.F.}},
\bauthor{\bsnm{Batista}, \binits{G.E.A.P.A.}}:
\bctitle{Speeding up all-pairwise dynamic time warping matrix calculation}.
In: \bbtitle{Proceedings of the 2016 SIAM International Conference on Data
  Mining (SDM)},
pp. \bfpage{837}--\blpage{845}
(\byear{2016}).
\doiurl{10.1137/1.9781611974348.94}
\end{bchapter}
\endbibitem

\bibitem[\protect\citeauthoryear{Herrmann and Webb}{2021}]{herrmann21}
\begin{barticle}
\bauthor{\bsnm{Herrmann}, \binits{M.}},
\bauthor{\bsnm{Webb}, \binits{G.I.}}:
\batitle{Early abandoning and pruning for elastic distances including dynamic
  time warping}.
\bjtitle{Data Mining and Knowledge Discovery}
\bvolume{35},
\bfpage{2577}--\blpage{2601}
(\byear{2021})
\doiurl{10.1007/s10618-021-00782-4}
\end{barticle}
\endbibitem

\bibitem[\protect\citeauthoryear{Levenshtein}{1965}]{levenshtein65}
\begin{barticle}
\bauthor{\bsnm{Levenshtein}, \binits{V.}}:
\batitle{Binary codes capable of correcting deletions, insertions and
  reversals}.
\bjtitle{Soviet Physics Doklady}
\bvolume{10}(\bissue{8}),
\bfpage{707}--\blpage{710}
(\byear{1965})
\end{barticle}
\endbibitem

\bibitem[\protect\citeauthoryear{Wagner and Fischer}{1974}]{wagner74}
\begin{barticle}
\bauthor{\bsnm{Wagner}, \binits{R.A.}},
\bauthor{\bsnm{Fischer}, \binits{M.J.}}:
\batitle{The string-to-string correction problem}.
\bjtitle{J. ACM}
\bvolume{21}(\bissue{1}),
\bfpage{168}--\blpage{173}
(\byear{1974})
\doiurl{10.1145/321796.321811}
\end{barticle}
\endbibitem

\bibitem[\protect\citeauthoryear{Hirschberg}{1975}]{hirschberg75}
\begin{barticle}
\bauthor{\bsnm{Hirschberg}, \binits{D.S.}}:
\batitle{A linear space algorithm for computing maximal common subsequences}.
\bjtitle{Commun. ACM}
\bvolume{18}(\bissue{6}),
\bfpage{341}--\blpage{343}
(\byear{1975})
\doiurl{10.1145/360825.360861}
\end{barticle}
\endbibitem

\bibitem[\protect\citeauthoryear{Knuth}{1992}]{knuth92}
\begin{barticle}
\bauthor{\bsnm{Knuth}, \binits{D.E.}}:
\batitle{Two notes on notation}.
\bjtitle{The American Mathematical Monthly}
\bvolume{99}(\bissue{5}),
\bfpage{403}--\blpage{422}
(\byear{1992})
\doiurl{10.2307/2325085}
\end{barticle}
\endbibitem

\end{thebibliography}

\newpage

\begin{appendices}
    \section{Scan and Fold}
\begin{algorithm}
    \caption{Scan operator for a binary function $f: \mathcal{T}_2 \times \mathcal{T}_1 \to \mathcal{T}_2$. The notation of using a backslash is borrowed from \citet{iverson62,iverson79}. Note that we prepend the initial value to the returned sequence. If $\mathcal{T}_1 = \mathcal{T}_2$, we take the first value of $x$ as the initial value.}
    \label{alg:scan}
    \begin{algorithmic}[1] \Function{$[f: \mathcal{T}_2 \times \mathcal{T}_1 \to \mathcal{T}_2]\backslash$}{$t_0: \mathcal{T}_2$, $x: \mathcal{T}_1^N$} $\to \mathcal{T}_2^{N+1}$
            \State $y: \mathcal{T}_2^{N+1}$
            \State $y_1 \gets t_0$
            \Statex
            \State $t: \mathcal{T}_2 \gets t_0$
            \For{$i \gets 1,N$}
                \State $t \gets f(t, x_i)$
                \State $y_{i+1} \gets t$
            \EndFor
            \Statex
            \State \Return $y$
        \EndFunction
        \Statex
        \Function{$[f: \mathcal{T} \times \mathcal{T} \to \mathcal{T}]\backslash$}{$x: \mathcal{T}^N$} $\to \mathcal{T}^N$
            \State \Return $f\backslash(x_1, x_{2 \ldots N})$
        \EndFunction
    \end{algorithmic}
\end{algorithm}

\begin{algorithm}
    \caption{Fold operator for a binary function $f: \mathcal{T}_2 \times \mathcal{T}_1 \to \mathcal{T}_2$. The notation of using a slash is borrowed from \citet{iverson62,iverson79}. If $\mathcal{T}_1 = \mathcal{T}_2$, we take the first value of $x$ as the initial value.}
    \label{alg:fold}
    \begin{algorithmic}[1]
        \Function{$[f: \mathcal{T}_2 \times \mathcal{T}_1 \to \mathcal{T}_2]/$}{$t_0: \mathcal{T}_2$, $x: \mathcal{T}_1^N$} $\to \mathcal{T}_2$
            \State $t: \mathcal{T}_2$ $\gets t_0$
            \For{$i \gets 1,N$}
                \State $t \gets f(t, x_i)$
            \EndFor
            \Statex
            \State \Return $t$
        \EndFunction
        \Statex
        \Function{$[f: \mathcal{T} \times \mathcal{T} \to \mathcal{T}]/$}{$x: \mathcal{T}^N$} $\to \mathcal{T}$
            \State \Return $f/(x_1, x_{2 \ldots N})$
        \EndFunction
    \end{algorithmic}
\end{algorithm}

\clearpage

\section{DTW and the Levenshtein Distance}
\label{sec:dtw_levenshtein}
Similarly to the Frechet distance, dynamic time warping (DTW) produces a discrete matching between existing elements of two polygonal curves.
The DTW distance is used frequently in the literature and fast algorithms are well-studied \citep{salvador07,praetzlich16,silva16,herrmann21}.
Replacing the dyadic function $\max\{ \cdot, \cdot \}$ with summation in Alg.~\ref{alg:fast} is sufficient to get an efficient algorithm for estimating the DTW distance and shows the close resemblance between both algorithms.
However, note that this change invalidates the triangle inequality of the resulting distant measure similar to the result of averaging the Fr\'echet distance \citep{brakatsoulas05}; hence, neither of these modifications result in metrics.
\begin{algorithm}
    \caption{Algorithm that maps a given distance matrix $d_{ij} = \| p_i - q_j \|_\mathcal{S}$ to the corresponding DTW distance of $p \in \mathbb{R}^{P \times D}$ and $q \in \mathbb{R}^{Q \times D}$.}
    \begin{algorithmic}[1]
        \Function{dtw\_min}{$a: \mathbb{R}$, $x: \mathbb{R}^2$} $\to \mathbb{R}$
            \State \Return $\min\{ a,\, x_1 \} + x_2$
        \EndFunction
        \Statex
        \Function{dtw\_next}{$a: \mathbb{R}^Q$, $x: \mathbb{R}^{Q}$} $\to \mathbb{R}^Q$
            \State $a_{2 \ldots Q} \gets \min\{ a_{1 \ldots Q-1},\, a_{2 \ldots Q} \}$
            \State \Return \Call{dtw\_min$\backslash$}{$a_1 + x_1,\, [a_{2 \ldots Q} \;|\; x_{2 \ldots Q}]$}
        \EndFunction
        \Statex
        \Function{dtw\_distance}{$d: \mathbb{R}^{P \times Q}$} $\to \mathbb{R}$
            \State \Return $[$\Call{dtw\_next$/$}{$+\backslash d_1, d_{2 \ldots P}$}
        \EndFunction
    \end{algorithmic}
\end{algorithm}

In Alg.~\ref{alg:levenshtein} we show another closely related distance measure: the Levenshtein distance~\cite{levenshtein65}---cf.~the results of \citet{wagner74} and \citet{hirschberg75}.
Although similar to the Fr\'echet or DTW distance, this algorithm does not search for the shortest, maximum distance as measured by $\| p_i - q_j \|_\mathcal{S}$, but rather accumulates $d_{ij} = [p_i \neq q_j] 
\in \{ 0, 1 \}$, where $[ \cdot ]$ is the Iverson bracket~\citep{knuth92} for character sequences $p \in \mathcal{C}^P$ and $q \in \mathcal{C}^Q$.

\begin{algorithm}
    \caption{Algorithm that maps a given distance matrix $d_{ij} = [p_i \neq q_j]$, where $[ \cdot ]$ is the Iverson bracket~\citep{knuth92}, to the Levenshtein distance of the character sequences $p \in \mathcal{C}^P$ and $q \in \mathcal{C}^Q$.}
    \label{alg:levenshtein}
    \begin{algorithmic}[1]
        \Function{levenshtein\_min}{$a: \mathbb{N}_+$, $x: \mathbb{N}_+$} $\to \mathbb{N}_+$
            \State \Return $\min\{ a + 1, x \}$
        \EndFunction
        \Statex
        \Function{levenshtein\_next}{$a: \mathbb{N}_+^Q$, $(i: \mathbb{N}_+,\, x: \mathbb{N}_+^Q)$} $\to \mathbb{N}_+^Q$
            \State $a_{2 \ldots Q} \gets \min\{a_{1 \ldots Q-1} + x_{2 \ldots Q},\, a_{2 \ldots Q} + 1 \}$
            \State \Return \Call{levenshtein\_min$\backslash$}{$\min\{ i + x_1,\, a_1 + 1 \},\, a_{2 \ldots Q}$}
        \EndFunction
        \Statex
        \Function{levenshtein\_distance}{$d: \mathbb{R}^{P \times Q}$} $\to \mathbb{N}_+$
            \State $\iota_{1 \ldots Q}: \mathbb{N}_+^{Q} \gets [1, \ldots, Q]$
            \State $\iota_{2 \ldots P}: \mathbb{N}_+^{P-1} \gets [2, \ldots, P]$
            \Statex
            \State $v_\mathrm{init}: \mathbb{N}_+ \gets$ \Call{levenshtein\_min$\backslash$}{$\iota_{1 \ldots Q} + d_1$}
            \State \Return \Call{levenshtein\_next$/$}{$v_\mathrm{init},\, (\iota_{2 \ldots P},\, d_{2 \ldots P})$}
        \EndFunction
    \end{algorithmic}
\end{algorithm}

\clearpage

\section{Additional Benchmarking Results}
\begin{figure}[htbp]
    \centering
    \begin{subfigure}{.49\textwidth}
        \includegraphics[width=.8\textwidth]{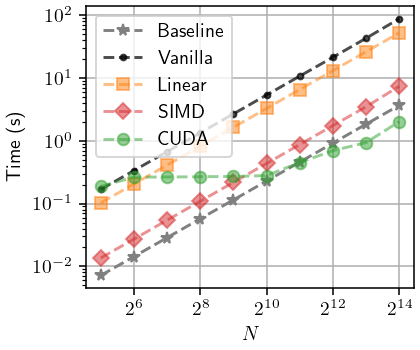}
        \caption{Absolute runtime with $P = 2^{10}$.}
    \end{subfigure}
    \begin{subfigure}{.49\textwidth}
        \includegraphics[width=.8\textwidth]{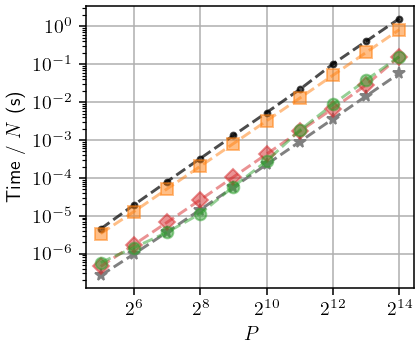}
        \caption{Absolute runtime with $N = 2^{10}$.}
    \end{subfigure}\\
    \begin{subfigure}{.49\textwidth}
        \includegraphics[width=.8\textwidth]{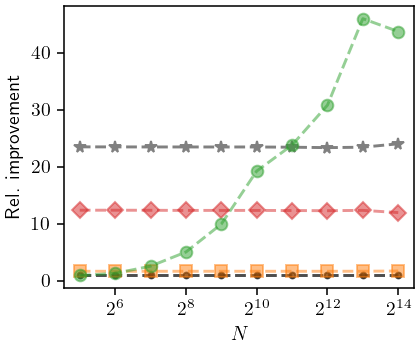}
        \caption{Relative improvement with $P = 2^{10}$.}
    \end{subfigure}
    \begin{subfigure}{.49\textwidth}
        \includegraphics[width=.8\textwidth]{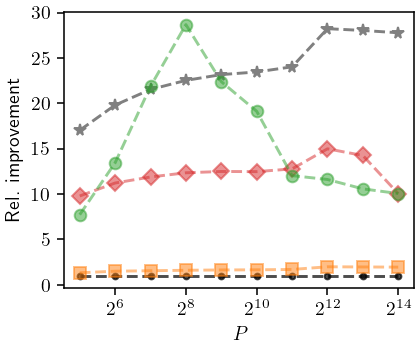}
        \caption{Relative improvement with $N = 2^{10}$.}
    \end{subfigure}
    \caption{Comparison of four different implementations of the Fast Fr\'echet Distance algorithm on a laptop (CPU: AMD Ryzen Threadripper 3960X, GPU: NVIDIA GeForce RTX 3090 (24\,GB VRAM), CUDA Version: 12.2) using 32-bit floating point numbers. \textit{Vanilla} and \textit{Linear} refer to Alg.~\ref{alg:no_recursion} and \ref{alg:linear}, respectively. The SIMD implementation uses a batch size of $B = 16$ (twice the size of a 256-bit register in order to improve the instruction level parallelism) and relies on the AVX2 instruction set; the baseline implementation uses the same technique to calculate $\sum_{ij} d_{ij}$. The CUDA implementation uses a grid and block size of 128 and 64, respectively, that was measured to perform best for $N=2^{13}$ and $P=2^{10}$. All variants utilize only a single CPU core.}
    \label{fig:benchmark_desktop}
\end{figure}

\end{appendices}

\end{document}